\documentclass{article}

\usepackage{graphicx,psfrag,epsf}
\usepackage{arxiv}
\usepackage[utf8]{inputenc} 
\usepackage[T1]{fontenc}    
\usepackage{hyperref}       
\usepackage{url}            
\usepackage{booktabs}       
\usepackage{amsfonts}       
\usepackage{nicefrac}       
\usepackage{microtype}      
\usepackage{lipsum}
\usepackage{natbib}
\usepackage{amsmath}
\usepackage{amssymb}
\usepackage{amsthm}
\def\T{{ \mathrm{\scriptscriptstyle T} }}

\def\uZ{{\bf Z}}
\def\uX{{\bf X}}
\def\uY{{\bf Y}}

\def\uD{{\bf D}}
\def\uV{{\bf V}}
\def\uI{{\bf I}}

\def\uS{{\bf S}}

\def\uF{{\bf F}}
\def\uc{{\bf c}}
\def\ub{{\bf b}}
\def\uu{{\bf u}}
\def\1{{\bf 1}}
\def\0{{\bf 0}}

\def\uomega{{\boldsymbol \omega}}
\def\uxi{{\boldsymbol \xi}}
\def\umu{{\boldsymbol \mu}}

\newtheorem{theorem}{Theorem}

\title{Synthetic control method with convex hull restrictions: A Bayesian maximum a posteriori approach}
\author{
  Gyuhyeong Goh\\
  Department of Statistics\\
  Kansas State University\\
  Manhattan, KS 66506\\
  \texttt{ggoh@ksu.edu} \\
   \And
  Jisang Yu\\
  Department of Agricultural Economics\\
  Kansas State University\\
  Manhattan, KS 66506\\
  \texttt{jisangyu@ksu.edu} \\
}

\begin{document}
\maketitle

\begin{abstract}
Synthetic control methods have gained popularity among causal studies with observational data, particularly when estimating the impacts of the interventions that are implemented to a small number of large units. Implementing the synthetic control methods faces two major challenges: a) estimating weights for each control unit to create a synthetic control and b) providing statistical inferences. To overcome these challenges, we propose a Bayesian framework that implements the synthetic control method with the parallelly shiftable convex hull and provides a useful Bayesian inference, which is drawn from the duality between a penalized least squares method and a Bayesian Maximum A Posteriori (MAP) approach. Simulation results indicate that the proposed method leads to smaller biases compared to alternatives. We apply our Bayesian method to the real data example of \cite{abadie2003economic} and find that the treatment effects are statistically significant during the subset of the post-treatment period.
\end{abstract}

\keywords{Bayesian inference \and Causal inference \and Duality theory \and Sparsity}

\section{Introduction}
Unlike experimental data via random assignments of interventions, causal inferences of a policy effect from observational data face challenges due to confounding factors. In social science, including economics and political science, conducting experiments is often infeasible and thus, many researchers rely on observational data. The literature on causal inferences with observational data has been involving with the rapid development of empirical approaches that are based on the potential outcome framework \citep[e.g.][]{luo2019matching,athey2017state,holland1986statistics,rubin1974estimating}. Under the potential outcome framework, one can consistently estimate the ``average treatment effect" from the average difference between treated and untreated units conditional on the confounding factors, i.e. the factors that are correlated with potential outcomes and with treatments \citep{rosenbaum1983central}.

In this paper, we contribute to the growing literature of synthetic control methods by providing a novel Bayesian framework to estimate treatment effects and conduct causal inferences. Synthetic control methods, originally introduced by \cite{abadie2003economic} and further established by \cite{abadie2010synthetic}, are designed to estimate the impacts of the interventions that are implemented to a small number of large units such as regions or countries. As discussed by \cite{abadie2019}, employing traditional regression techniques, including the popular approaches like difference-in-differences or propensity score matching, is often inappropriate in such cases as those approaches do not lead to proper estimation of potential outcomes. One of the alternatives is to use time-series analyses for a single unit but it is challenging to control for confounding factors in traditional time-series approaches with few treated units. 

\cite{abadie2019,abadie2015comparative} and \cite{athey2017state} describe the strength of the synthetic control methods when conducting comparative case studies with the interest of estimating longer-term effects of infrequent policies. The synthetic control methods can estimate the over-time evolution in the potential outcomes of ``treated" units by constructing weighted sums of control units that mimic the evolution of the treated units during the pre-treatment period, i.e. the synthetic controls. Moreover, unlike in the difference-in-differences estimation, the synthetic control methods do not require the pre-trend parallel outcome assumption as normally they allow the outcomes of ``untreated" units to depend on unit-specific unobservables with time-varying coefficients.

Despite the strengths, the synthetic control methods face two major challenges: a) estimating weights for each control unit to create a synthetic control and b) providing statistical inferences. Our contribution to the literature is to provide a novel Bayesian approach that mitigates these two challenges. Using a Bayesian \emph{A Posteriori} (MAP) approach, we incorporate a \emph{parallelly shiftable} convex hull restriction to flexibly estimate the weights while preserving the nature of the convex hull restriction that prevents extrapolation biases. Also, we show that our approach naturally provides valid statistical inferences.

How to estimate the weights for constructing synthetic controls has been a great interest to econometricians and statisticians \citep[e.g.][]{doudchenko2016balancing,li2019statistical}. In the original framework of \cite{abadie2003economic} and \cite{abadie2010synthetic}, weights are estimated by a constrained optimization problem. The key feature of the constrained optimization problem is to employ the convex hull restriction. In other words, the optimal weights are the weights such that the estimated potential outcomes of the ``treated" unit are either in the convex hull of the predictors of the ``donors", i.e the ``untreated" units, or approximately close to the convex hull.  Recent development in the literature discusses relaxing the constraints that make the potential outcomes to be in or close to the convex hull \citep{li2019statistical,doudchenko2016balancing}. For example, \cite{doudchenko2016balancing} propose a regularization method that allows the weights to be negative, relaxing the constraint of the weights add up to one, and allows a permanent difference between the treated and the donors that is additive. \cite{li2019statistical} propose to modify the approach of \cite{abadie2015comparative} and \cite{abadie2010synthetic} by replacing the convex hull restriction for the weights with a conical hull restriction. 

However, these flexible approaches can increase the chances of extrapolation \citep{abadie2019}, which may lead to inconsistent estimations of treatment effects. The risk of extrapolation in estimating synthetic controls becomes more important as the numbers of potential ``donors" or predictors increase. Moreover, it is important to obtain sparse solutions on optimal weights to minimize the extrapolation biases and to obtain stable and interpretable counterfactuals \citep{abadie2019}. By showing that a \emph{parallelly shiftable} convex hull restriction also induces the sparsity in the weight estimation, we claim that our approach is flexible but still avoids the extrapolation biases.

As discussed by \cite{abadie2010synthetic}, standard inference techniques based on large samples are inappropriate for the treatment effects estimated by synthetic control methods. Alternative to the standard inference techniques, \cite{abadie2010synthetic} propose to estimate ``placebo effects" from a permutation distribution. However, permutation approaches often rely on distributional assumptions such as the exchangeability of the units \citep{firpo2018synthetic,hahn2017synthetic}. Recently, \cite{li2019statistical} and \cite{chernozhukov2018inference} propose different approaches that based on subsampling methods or k-fold cross-fitting not to rely on permutation distributions. We show that our Bayesian approach, which does not require data-partitioning nor subsampling, provides valid statistical inferences. 

The remainder of the paper is as follows. we first highlight the importance of preserving the convex hull restriction of \cite{abadie2003economic} and \cite{abadie2010synthetic} and propose a synthetic control method that uses a \emph{parallelly shiftable} convex hull. We also show that the convex hull restriction induces the sparsity in the estimation of the weights using Lagrangian duality. We then propose a novel Bayesian framework that implements the synthetic control method with the \emph{parallelly shiftable} convex hull and provides a useful Bayesian inference, which is drawn from the duality between a penalized least squares method and a Bayesian MAP approach. Finally, we conduct simulations to show that our proposed approach leads to smaller biases compared to alternative approaches and apply our proposed approach to the real data example of \cite{abadie2003economic}.

\section{Synthetic control method and convex hull restrictions}
For $i=1,\ldots, N$ and $t=1,\ldots, T$, let $Y_{it}^{(1)} \in \mathbb{R}$ be the potential outcome when unit $i$ is exposed to the treatment or intervention at time $t$ and $Y_{it}^{(0)}\in \mathbb{R}$ be the potential outcome of unit $i$ at time $t$ with no treatment. Without loss of generality, we assume that unit $1$ is exposed to the treatment over the last $T_1 (<T)$ time points, while the remaining $N-1$ units are unaffected by the treatment over the entire time period. The goal of causal studies is to estimate the treatment effect for unit $1$ at time $t$,  $\theta_{1t}=Y_{1t}^{(1)}-Y_{1t}^{(0)}$  for $t=T_0+1,\ldots, T$ or the average treatment effect,  $\bar{\theta}_{1}=T_1^{-1}\sum_{t=T_0+1}^T \theta_{1t}$, where $T_0=T-T_1$ is the length of the pre-treatment period. Given that $\{Y_{1t}^{(1)}, t>T_0\}$ is observed, estimating $\{\theta_{1t},t>T_0\}$ or $\bar{\theta}_{1}$ requires the estimation of $\{Y_{1t}^{(0)}, t>T_0\}$. 

The observed outcome of unit $i$ at time $t$ can be expressed as
$$Y_{it}=Y_{it}^{(0)}+\theta_{it} D_{it},$$
where $D_{it}=1$ if $i=1$ and $t>T_0$ and $D_{it}=0$ otherwise. Let $\uZ_{it}$ be a $p\times 1$ vector of covariates that are observed from unit $i$ at time $t$ and \emph{assumed to be unaffected by the treatment}. If $\uZ_{it}$'s are time-invariant, we define $\uZ_{i1}=\cdots=\uZ_{iT}$. The synthetic control estimator for the counterfactual potential outcomes, i.e. the potential outcomes of unit 1 with no treatment in time $t$, is defined as a linear combination of the untreated units (i.e. the donor pool), 
\begin{eqnarray}\label{scm:est1}
\hat{Y}_{1t}^{(0)}= \omega_1+\sum_{i=2}^{N} {\omega}_i Y_{it},\quad t=T_0+1,\ldots,T,
\end{eqnarray}
for some given $\omega_1,\ldots,\omega_{N}$. Let $\uomega=(\omega_1,\ldots,\omega_{N})^\T$ be the $N\times 1$ vector of \emph{unknown} coefficients (or weights) of the synthetic control estimator. For estimating $\uomega$, the main idea of the synthetic control methods is to solve the following system of equations with some constraints:
\begin{eqnarray}\label{eq:u1}
\begin{cases}
 &Y_{1t}= \omega_1+\sum_{i=2}^{N}{\omega}_i Y_{it} ,\quad t=1,\ldots,T_0,\\ 
 &\uZ_{1t}= \sum_{i=2}^{N}{\omega}_i \uZ_{it} ,\quad t=1,\ldots,T_0.
\end{cases}
\end{eqnarray}

\citet{abadie2003economic} and \citet{ abadie2010synthetic} propose to obtain $\uomega$ by solving Equation \eqref{eq:u1} with the ``convex hull constraint", that is,
\begin{eqnarray}\label{conv:const}
\uomega \in \mathcal{W}_{\text{conv}}=\left\{\uomega :\omega_1=0,
~\omega_2\geq 0,\ldots, \omega_N\geq 0 ,~\text{and}~ \sum_{i=2}^{N} \omega_i=1\right\}.
\end{eqnarray}
The solution of Equation \eqref{eq:u1} with this constraint exists only if $\uX_1=(\uY_1^\T , \uZ_{1}^\T)^\T$ is contained in the convex hull of $\{\uX_i=(\uY_{i}^\T, \uZ_{i}^\T)^\T:i=2,\ldots,N\}$, where $\uY_{i}=(Y_{i1},\ldots,Y_{iT_0})^\T$ and $\uZ_i=(\uZ_{i1}^\T,\ldots,\uZ_{iT_0}^\T)^\T$ for time-varying covariates (or $\uZ_i=\uZ_{i1}$ for time-invariant covariates). In practice, however, the solution often does not exist and \citet{abadie2003economic} and \citet{abadie2010synthetic} propose to obtain the \emph{approximate} solutions in such cases. The implementation of the synthetic control method of \citet{abadie2003economic} and \citet{abadie2010synthetic} is done by obtaining $\uomega$ that minimizes 
\begin{eqnarray}\label{object:f1} \left(\uX_1 - \sum_{i=2}^N \omega_i \uX_i\right)^\T \uV \left(\uX_1 - \sum_{i=2}^N \omega_i \uX_i\right) \quad \text{subject to $\uomega \in \mathcal{W}_{\text{conv}}$ }
\end{eqnarray}
for some positive definite matrix $\uV$, which allows for the \emph{approximate} solutions. 

Recent development in the literature proposes to relax the convex combination assumption for synthetic control estimators. For example, \citet{doudchenko2016balancing} propose to use the elastic net constraint \citep{zou2005regularization},
\begin{eqnarray}\label{enet:const}
\mathcal{W}_{\text{enet}}=\left\{\uomega \in \mathbb{R}^N:  \frac{1-\alpha}{2}\sum_{i=2}^N \omega_i^2 +\alpha \sum_{i=2}^N |\omega_i| \leq \lambda \right\}
\end{eqnarray}
for some $\alpha \in [0,1]$ and $\lambda >0$. Recently, \citet{li2019statistical} suggests to assume that $\uX_1$ belongs to the shifted conical hull of $\{\uX_i:i=2,\ldots,N\}$ by imposing the following constraint on $\uomega$,
\begin{eqnarray}\label{coni}
\mathcal{W}_{\text{coni}}=\left\{\uomega:\omega_1\in \mathbb{R}~ \text{and}~  \omega_i\geq 0,~i=2,\ldots,N\right\}.
\end{eqnarray}

It should be noted that removing the convex hull restriction is risky since it may lead to the extrapolation outside the range of the observed data \citep{abadie2015comparative, abadie2019}. 
In addition to that the convex hull restriction prevents extrapolation problems, another important feature is that the convex hull restriction can lead to a sparse solution to $\uomega$-estimation, i.e. many zero weights are assigned to the untreated units in the donor pool for constructing a synthetic control. Because of the sparsity, synthetic control methods with the convex hull restriction simultaneously execute both the estimation of the weights and the donor pool selection. While a geometric interpretation of this sparsity property is discussed by \citet{abadie2019}, a clear theoretical explanation has not yet been provided. In the following theorem, we provide a sufficient condition under which the convex hull restriction yields a sparse estimate of $\uomega$.
\begin{theorem}\label{thm:1} Define 
$$\mathcal{W}_1= \left\{\uomega :\omega_1=0
~\text{and}~ \sum_{i=2}^{N} \omega_i=1\right\}.$$
Let ${\uomega}^* =\arg\min_{{\uomega}  \in \mathcal{W}_1 } L(\uomega)$,
where 
\begin{eqnarray*}
L(\uomega)=\left(\uX_1 - \sum_{i=2}^N \omega_i \uX_i\right)^\T \uV \left(\uX_1 - \sum_{i=2}^N \omega_i \uX_i\right).
\end{eqnarray*}
Given $\uV$, if ${\uomega}^*\notin  \mathcal{W}_{\text{conv}}$, then the solution of the constrained optimization with the convex hull restriction in \eqref{object:f1} contains at least one zero entry (in addition to $\omega_1$).
\end{theorem}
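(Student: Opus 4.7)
The plan is to argue by contradiction using the Karush–Kuhn–Tucker (KKT) conditions. The only difference between the feasible sets $\mathcal{W}_1$ and $\mathcal{W}_{\text{conv}}$ is the non-negativity block $\omega_i \geq 0$ for $i=2,\ldots,N$. So I will suppose that the solution $\hat{\uomega}$ of the convex-hull-restricted problem~\eqref{object:f1} has every $\hat{\omega}_i > 0$ for $i \geq 2$, and show that this forces $\hat{\uomega}=\uomega^*$, contradicting $\uomega^* \notin \mathcal{W}_{\text{conv}}$.

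First, I would write the Lagrangian of \eqref{object:f1},
\[
\mathcal{L}(\uomega,\lambda,\mu,\uxi)= L(\uomega)+\lambda\,\omega_1+\mu\!\left(\sum_{i=2}^N \omega_i-1\right)-\sum_{i=2}^N \xi_i\,\omega_i,\qquad \xi_i\geq 0,
\]
and record the KKT conditions: stationarity $\partial L/\partial \omega_i+\mu-\xi_i=0$ for $i=2,\ldots,N$, primal feasibility $\hat{\uomega}\in\mathcal{W}_{\text{conv}}$, dual feasibility $\xi_i\geq 0$, and complementary slackness $\xi_i\hat{\omega}_i=0$. Under the contradictory hypothesis $\hat{\omega}_i>0$, complementary slackness forces $\xi_i=0$ for every $i\geq 2$, so stationarity reduces to $\partial L/\partial \omega_i+\mu=0$ for $i=2,\ldots,N$, together with $\omega_1=0$ and $\sum_{i\geq 2}\omega_i=1$. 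But these are exactly the KKT/Lagrange conditions for the affine-equality problem $\min_{\uomega\in\mathcal{W}_1}L(\uomega)$.

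Next, I would invoke convexity: $L$ is a convex quadratic form in $\uomega$ (since $\uV \succ 0$), and $\mathcal{W}_1$ is an affine set, so any stationary point is a global minimizer over $\mathcal{W}_1$. Hence $\hat{\uomega}\in \arg\min_{\uomega\in\mathcal{W}_1}L(\uomega)$. Under the standard full-column-rank assumption on the design $[\uX_2,\ldots,\uX_N]$ (which makes $L$ strictly convex on the affine subspace $\mathcal{W}_1$), this minimizer is unique, so $\hat{\uomega}=\uomega^*$. Since $\hat{\uomega}\in\mathcal{W}_{\text{conv}}$ by construction but $\uomega^*\notin\mathcal{W}_{\text{conv}}$ by hypothesis, we have a contradiction. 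Therefore at least one of the constraints $\omega_i\geq 0$ must be active at the optimum, i.e. at least one entry (beyond $\omega_1$) equals zero.

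The main obstacle is the uniqueness step: if the columns $\uX_2,\ldots,\uX_N$ are linearly dependent, $L$ is only weakly convex on $\mathcal{W}_1$ and the set of minimizers $\mathcal{M}_1$ can be an affine flat rather than a point, so $\hat{\uomega}$ need only lie in $\mathcal{M}_1$ rather than coincide with $\uomega^*$. In that case one has to strengthen the statement to say that either $\uomega^*$ can be chosen inside $\mathcal{W}_{\text{conv}}$ (and the conclusion is vacuous) or the entire flat $\mathcal{M}_1$ misses $\mathcal{W}_{\text{conv}}$, which again forces an active non-negativity constraint at $\hat{\uomega}$. Under the natural full-rank assumption implicit in the synthetic-control setup, this subtlety disappears and the KKT argument above closes the proof.
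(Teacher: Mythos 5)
Your proof follows essentially the same route as the paper's: apply the KKT conditions to the convex-hull-constrained problem, observe that if every $\hat{\omega}_i>0$ then complementary slackness kills the multipliers on the non-negativity constraints, so the solution reduces to the equality-constrained minimizer over $\mathcal{W}_1$, contradicting $\uomega^*\notin\mathcal{W}_{\text{conv}}$. Your closing remark on the possible non-uniqueness of $\arg\min_{\uomega\in\mathcal{W}_1}L(\uomega)$ when $\uX_2,\ldots,\uX_N$ are linearly dependent is a genuine subtlety that the paper's proof silently glosses over (it asserts $\hat{\uomega}=\uomega^*$ without a rank condition), so your version is if anything slightly more careful.
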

\begin{proof}[Proof of Theorem \ref{thm:1}] Let $\hat{\uomega}$ be the solution of \eqref{object:f1}. By the Lagrangian duality, $\hat{\uomega}$ is obtained by minimizing
\begin{eqnarray}\label{L:eq}
\min_{{\uomega} ,\lambda ,\umu:\omega_1=0} \left\{ L(\uomega)+\lambda\left(\sum_{i=2}^n\omega_i -1\right)- \sum_{i=2}^n\mu_i\omega_i \right\}.
\end{eqnarray}
By the Karush-Kuhn-Tucker (KKT) theorem, the following conditions must hold for $\hat{\uomega}$:
\begin{eqnarray*}
&\mu_i \geq 0 \quad \text{for all}~i=2,\ldots,N,~&\text{(dual feasibility)} \\
&\mu_i \hat{\omega}_i =0 \quad \text{for all} ~i=2,\ldots,N,~&\text{(complementary slackness)}\\
&\sum_{i=2}^N \hat{\omega}=1 \quad\text{and}\quad \hat{\omega}_i \geq 0 \quad \text{for all} ~i=2,\ldots,N.~ &\text{(primal feasibility)}
\end{eqnarray*}
By the dual feasibility and the complementary slackness conditions, we must have either (i) $\hat{\omega}_i=0$ for some $i \in \{2,\ldots,N\}$ or (ii) $\mu_i=0$ for all $i\in \{2,\ldots,N\}$. Suppose that the second situation occurs, i.e. $\mu_i=0$ for all $i\in \{2,\ldots,N\}$. Then, the optimization problem in \eqref{L:eq} can be simplified as 
\begin{eqnarray*}
\min_{{\uomega} ,\lambda :\omega_1=0} \left\{ L(\uomega)+\lambda\left(\sum_{i=2}^n\omega_i -1\right)\right\}.
\end{eqnarray*}
This implies that $\hat{\uomega}=\uomega^*$, namely $\uomega^* \in \mathcal{W}_{\text{conv}}$, which contradicts our assumption that $\uomega^* \notin \mathcal{W}_{\text{conv}}$. Therefore, if $\uomega^* \notin \mathcal{W}_{\text{conv}}$, then at least one of $\hat{\omega}_i$ $(i=2,\ldots,N)$ must be zero.
\end{proof}

Theorem \ref{thm:1} formalizes the sparsity discussion of \cite{abadie2019}. In many practical applications of the synthetic control methods, the sufficient condition described in Theorem \ref{thm:1} can hold since $\mathcal{W}_{\text{conv}}$ is a much smaller subset of $\mathcal{W}_1$. Hence, synthetic control methods with the convex hull restriction induces the sparsity in most cases. 

However, as discussed by \cite{doudchenko2016balancing}, the convex hull restriction prevents to achieve the balance between the treated unit and the synthetic control if there is a permanent additive difference between the outcomes of the treated unit and the outcomes of the control units. Standard difference-in-differences argument is that we can ``difference out" such permanent additive differences. Therefore, allowing for such differences in synthetic control methods can improve the estimation performance.

To avoid extrapolation and enhance the flexibility of the synthetic control methods simultaneously, we propose to allow \emph{parallel shifts} of convex hull constraints. Formally, this can be defined by
\begin{eqnarray}\label{lsconv:const}
\mathcal{W}_{\text{ps-conv}}=\left\{\uomega:\omega_1\in \mathbb{R},~ ~\omega_2\geq 0,\ldots, \omega_N\geq 0 ,~\text{and}~\sum_{i=2}^N \omega_i = 1 \right\}.
\end{eqnarray}
Let 
$${\uX}_{0}=\left[\begin{matrix}\1 &\uY_2&\cdots&\uY_N \\
\0&\uZ_2&\cdots&\uZ_N
\end{matrix}\right],$$
where $\1$ denotes a column vector of ones and $\0$ denotes a column vector of zeros. Then, $\uomega$ can be obtained by minimizing
\begin{eqnarray}\label{con:min}
 \min_{\uomega \in \mathcal{W}_{\text{ps-conv}}} \left(\uX_1 -  \uX_0\uomega \right)^\T \uV \left(\uX_1 -  \uX_0\uomega \right),
\end{eqnarray}
for given $\uV$, where the specific structure of $\uV$ is given in Section \ref{sec:4}. 

The following theorem shows that the sparsity in the weight estimation of the synthetic control methods is still preserved with the parallelly shiftable convex hull constraint.
\begin{theorem}\label{thm:2} 
Let 
\begin{eqnarray*} \uomega^{**}=\arg\min_{\uomega\in \mathcal{W}_2 } \left(\uX_1 -  {\uX}_0\uomega \right)^\T \uV \left(\uX_1 -  {\uX}_0\uomega \right),
\end{eqnarray*}
where 
\begin{eqnarray*}
\mathcal{W}_2=\left\{\uomega :\omega_1\in \mathbb{R},
~\sum_{i=2}^N \omega_i=1 \right\}. 
\end{eqnarray*}
Given $\uV$, if ${\uomega}^{**} \notin  \mathcal{W}_{\text{ps-conv}}$, then the solution of the constrained optimization with the convex hull restriction in \eqref{con:min} contains at least one zero entry.
\end{theorem}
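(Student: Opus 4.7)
The plan is to mirror the Lagrangian/KKT argument used in the proof of Theorem~\ref{thm:1}, adapted to the enlarged constraint set $\mathcal{W}_{\text{ps-conv}}$ in which $\omega_1$ is free rather than fixed at zero. Write the objective as $Q(\uomega) = (\uX_1 - \uX_0 \uomega)^\T \uV (\uX_1 - \uX_0 \uomega)$. The key structural observation is that the only constraints distinguishing $\mathcal{W}_{\text{ps-conv}}$ from $\mathcal{W}_2$ are the nonnegativity constraints $\omega_i \geq 0$ for $i = 2,\ldots,N$; the equality $\sum_{i=2}^N \omega_i = 1$ appears in both sets, and $\omega_1$ is unrestricted in both. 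This means that, after dualization, the two problems differ only through the inequality multipliers $\mu_i$.

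Next, I would form the Lagrangian
\begin{eqnarray*}
Q(\uomega) + \lambda\Bigl(\sum_{i=2}^N \omega_i - 1\Bigr) - \sum_{i=2}^N \mu_i \omega_i,
\end{eqnarray*}
minimized jointly over $\uomega$ (with $\omega_1$ free), $\lambda \in \mathbb{R}$, and $\mu_i \geq 0$. Denoting the minimizer of \eqref{con:min} by $\hat{\uomega}$, the KKT conditions deliver dual feasibility ($\mu_i \geq 0$), complementary slackness ($\mu_i \hat{\omega}_i = 0$), and primal feasibility ($\hat{\uomega} \in \mathcal{W}_{\text{ps-conv}}$). Exactly as in Theorem~\ref{thm:1}, complementary slackness forces either (i) $\hat{\omega}_i = 0$ for some $i \in \{2,\ldots,N\}$, which is the desired conclusion, or (ii) $\mu_i = 0$ for every $i \in \{2,\ldots,N\}$.

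To rule out case (ii), I would observe that if all the $\mu_i$ vanish the Lagrangian collapses to $Q(\uomega) + \lambda(\sum_{i=2}^N \omega_i - 1)$, which is precisely the Lagrangian associated with the minimization of $Q$ over $\mathcal{W}_2$. By Lagrangian duality this would force $\hat{\uomega} = \uomega^{**}$, contradicting the hypothesis $\uomega^{**} \notin \mathcal{W}_{\text{ps-conv}}$; hence case (i) must hold. The argument is structurally parallel to Theorem~\ref{thm:1}, and I do not expect a substantive obstacle. The one point requiring care is the bookkeeping for the free variable $\omega_1$: since no multiplier attaches to it, the $\omega_1$-stationarity equation is identical in the two problems, so the reduction in case (ii) genuinely identifies $\hat{\uomega}$ with $\uomega^{**}$. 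It is also worth verifying that $\uV$ and the columns of $\uX_0$ are such that $\uomega^{**}$ is well-defined, so that the contradiction in case (ii) is meaningful.
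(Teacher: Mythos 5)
Your argument is correct and is precisely the one the paper intends: the published proof of Theorem~\ref{thm:2} simply states that it is analogous to Theorem~\ref{thm:1}, and you have carried out that adaptation faithfully, including the only point of difference (the free intercept $\omega_1$, which carries no multiplier and so does not disturb the reduction in case (ii)). No substantive discrepancy with the paper's approach.
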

\begin{proof}[Proof of Theorem \ref{thm:2}]
The proof is analogous to the proof of Theorem \ref{thm:1}.
\end{proof}

Let $\hat{\uomega}$ be the solution of \eqref{con:min}. We then define a new synthetic control estimator as
\begin{eqnarray}\label{sscm:est}
\hat{Y}_{1t}^{(0)}= \hat{\omega}_1+\sum_{i=2}^{N} \hat{\omega}_i Y_{it},\quad t=T_0+1,\ldots,T.
\end{eqnarray}
The proposed synthetic control estimate can be sensitive to the choice of the tuning parameter $\uV$ when the solution of Equation \eqref{eq:u1} does not exist subject to $\omega \in \mathcal{W}_{\text{ps-conv}}$. For tuning parameter selection, cross-validation techniques have been commonly employed. However, the synthetic control estimates resulting from cross-validations can vary with different data-splitting schemes. Another statistical challenge is conducting statistical inferences of the synthetic control estimators. To address the aforementioned issues, we propose a Bayesian inference method that draws \emph{posterior probabilistic} conclusions about the synthetic control estimator given the observed data. In a Bayesian framework, the uncertainty associated with tuning parameter selection can be integrated out by using the marginal posterior distribution of $\uomega$.

\section{Bayesian synthetic control methods}\label{sec:4}
A Maximum \emph{A Posteriori} (MAP) approach, which utilizes the mode of the posterior distribution as a point estimate, has been used to facilitate Bayesian inferences without assuming the distribution of data  \citep[e.g.][]{grendar2009asymptotic, goh2018bayesian, florens2019gaussian}. Also, from a Bayesian perspective, a penalized least squares estimate is equivalent to a MAP estimate under some shrinkage prior--often referred to as Bayesian duality \citep{polson2016mixtures, bhadra2019lasso}. For example, the lasso estimate can be interpreted as the MAP estimate of Bayesian linear regression with the double-exponential (or Laplace) prior distribution \citep{tibshirani1996regression, wang2007robust, park2008bayesian}.

From the duality property, the constrained optimization in \eqref{con:min} is equivalent to the MAP estimation of the following Bayesian model:
\begin{eqnarray}
\label{like:X} f(\uX_1|\uomega,\uV)&\propto& \exp\left\{-\frac{1}{2}\left\|\uV^{1/2}\left(\uX_1  -  \uX_0\uomega \right)\right\|^2\right\},\\
\label{prior:w} \pi(\uomega)&\propto& \mathbb{I}\left(\sum_{i=2}^N \omega_i=1\right) \prod_{i=2}^N \mathbb{I}(\omega_i\geq0),
\end{eqnarray}
where $f(\uX_1|\uomega,\uV)$ represents the (pseudo) likelihood function and $\pi(\uomega)$ denotes the prior density function of $\uomega$. Note that the duality follows from Bayes theorem, $\pi(\uomega |\uX_1,\uV)\propto f(\uX_1|\uomega,\uV)\pi(\uomega)$. However, the Bayesian duality property is invalid if the resulting posterior distribution is improper. Since $\pi(\uomega)$ is an improper prior, it is crucial to check the propriety of our posterior distribution. The following theorem shows that the posterior distribution is proper.
\begin{theorem}\label{thm:3}
The proposed Bayesian method produces a proper posterior distribution, that is,
$$ \int f(\uX_1|\uomega,\uV)\pi(\uomega) d\uomega <\infty.$$
\end{theorem}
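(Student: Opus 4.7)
My overall plan is to split the integration into the single unconstrained coordinate $\omega_1\in\mathbb{R}$ and the remaining coordinates $(\omega_2,\ldots,\omega_N)$, which the prior confines to a compact simplex. First I would decompose $\uX_0\uomega=\omega_1\ub+\sum_{i=2}^N\omega_i\uX_i$ using the block structure of $\uX_0$, where $\ub=(\1^\T,\0^\T)^\T$ is the first column of $\uX_0$. The key observation is that $\ub\neq\0$ (its top block is a vector of $T_0\geq 1$ ones), and combined with positive definiteness of $\uV$ this gives $\ub^\T\uV\ub>0$.

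Next I would integrate over $\omega_1\in\mathbb{R}$ with $(\omega_2,\ldots,\omega_N)$ held fixed. Viewed as a function of $\omega_1$, the exponent $-\tfrac{1}{2}\|\uV^{1/2}(\uX_1-\uX_0\uomega)\|^2$ is a strictly concave quadratic with leading coefficient $-\tfrac{1}{2}\ub^\T\uV\ub<0$. Completing the square and applying the standard one-dimensional Gaussian integral yields
$$\int_{\mathbb{R}}\exp\!\left\{-\tfrac{1}{2}\|\uV^{1/2}(\uX_1-\uX_0\uomega)\|^2\right\}d\omega_1 \;\leq\; \sqrt{\frac{2\pi}{\ub^\T\uV\ub}},$$
uniformly in $(\omega_2,\ldots,\omega_N)$, since the residual quadratic left over after completing the square is governed by the positive semidefinite Schur complement $\uV-\uV\ub(\ub^\T\uV\ub)^{-1}\ub^\T\uV$ and therefore contributes a factor of at most $1$.

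Finally, the remaining integration over $\{(\omega_2,\ldots,\omega_N):\omega_i\geq 0,~\sum_{i=2}^N\omega_i=1\}$ takes place on a compact $(N-2)$-dimensional simplex of finite surface measure, so a uniformly bounded integrand produces a finite value and the posterior is proper. The only nontrivial ingredient is the strict positivity of $\ub^\T\uV\ub$; without the structural fact that the first column of $\uX_0$ is nonzero, the unconstrained direction $\omega_1$ would yield a divergent integral, so this positivity is the main (if easily verified) obstacle to the proof.
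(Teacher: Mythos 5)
Your proof is correct and follows essentially the same route as the paper's: both isolate the unconstrained intercept $\omega_1$, reduce its integral to a one-dimensional Gaussian with precision $\uX_{01}^\T \uV \uX_{01}=\ub^\T\uV\ub>0$ (where $\uX_{01}=\ub$ is the first column of $\uX_0$), and then appeal to the finiteness of the simplex carrying $(\omega_2,\ldots,\omega_N)$. The only difference is organizational --- you complete the square in $\omega_1$ conditionally on each fixed point of the simplex and use the positive semidefiniteness of the Schur complement $\uV-\uV\ub(\ub^\T\uV\ub)^{-1}\ub^\T\uV$ to get a bound uniform over the simplex, whereas the paper anchors a pointwise lower bound of the exponent at the global unconstrained minimizer $\tilde{\uomega}$ --- and your ordering is, if anything, the more carefully justified of the two.
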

\begin{proof}[Proof of Theorem \ref{thm:3}] 
Let $\ell(\uomega)=\|\uV^{1/2}\left(\uX_1  -\uX_{0}\uomega \right)\|^2$ and $\tilde{\uomega}=\arg\min_{\uomega\in \mathbb{R}^N }\ell(\uomega)$. Then, we have
\begin{eqnarray*}
 \ell({\uomega})&\geq&   \left\|\uV^{1/2}\left(\tilde{\uX}_1 - \omega_1\uX_{01} \right)\right\|^2\\
 &=&\left\|\uV^{1/2}\left(\tilde{\uX}_1 - \tilde{\omega}_1\uX_{01} \right)\right\|^2+ (\uX_{01}^\T \uV\uX_{01})(\omega_1 -\tilde{\omega}_1)^2
\end{eqnarray*}
for any $\omega_1 \in \mathbb{R}$, where $\uX_{01}$ indicates the first column of $\uX_0$ and $\tilde{\uX}_1=\uX_1-\sum_{i=2}^N \tilde{\omega}_i \uX_i$. 
This implies that
\begin{eqnarray*}
\nonumber \int f(\uX_1|\uomega,\uV)\pi(\uomega) d\uomega &\leq& c \int \exp\left\{-\frac{1}{2}(\uX_{01}^\T \uV\uX_{01})(\omega_1 -\tilde{\omega}_1)^2\right\}d\omega_1\\
&&\times \int \cdots \int \mathbb{I}\left(\sum_{i=2}^N \omega_i=1\right) \prod_{i=2}^N \mathbb{I}(\omega_i\geq0) d\omega_2 \cdots d\omega_N \label{eq:finite}
\end{eqnarray*}
for some positive constant $c$. From the normal density function and the Dirichlet density function, it is immediate that the integrals are finite.
\end{proof}

In a Bayesian framework, to select the optimal values of $\uV$, we can assign prior distributions for $\uV$. As in the literature on synthetic control methods including \cite{abadie2003economic} and \cite{abadie2010synthetic}, we assume that $\uV$ is a diagonal matrix. Specifically, we consider the following form of $\uV$:
\begin{eqnarray}\label{a:V}
\uV=\frac{1}{\nu}  \text{Diag}( \uI_{T_0},\xi_1 \uI_{R_1} ,\ldots,\xi_p \uI_{R_p}),
\end{eqnarray}
where $\nu (>0)$ is a scale parameter that measures the loss of the quadratic approximation to the exact estimating equation \eqref{eq:u1} , $\xi_j (>0)$ controls the influence  of the $j$-th covariate on the synthetic control estimator, and $R_j$ indicates the dimension of the $j$-th covariate. 

Under the proposed structure \eqref{a:V}, we assign the following conjugate priors for $\nu$:
\begin{eqnarray*}
\pi(\nu)&\propto& \nu^{-c_0-1}\exp(-d_0/\nu)\mathbb{I}(\nu>0),
\end{eqnarray*}
where $c_0(>0)$ and $d_0(>0)$ are prespecified hyperparameters. To incorporate automatic variable selection into our posterior inference, we consider
$$\xi_j\sim \text{Ber}(\eta_{0j}),\quad j=1,\ldots,p,$$
where $\eta_{0j}\in[0,1]$ $(j=1,\ldots,p)$ are prespecified hyperprarmeters. In this paper, we set $c_0=d_0=0.5$ and $\eta_{01}=\cdots=\eta_{0p}=0.5$, so that the priors are approximately non-informative.

The MAP estimate of $\uomega$ is obtained by solving $\max_{\uomega} \pi(\uomega | \uX_1,\uX_0)$ or equivalently solving $\max_{\uomega} f(\uX_1|\uomega)\pi(\uomega)$, where 
$$  f(\uX_1|\uomega) =  \int  f(\uX_1|\uomega,\nu,\uxi) \pi(\nu)\pi(\uxi) d\nu d\uxi.$$
However, it is challenging to obtain a closed-form expression for $ f(\uX_1|\uomega)$. To overcome this challenge, we propose to use the Monte Carlo Expectation Maximization (EM) algorithm of \citet{wei1990monte} which proceeds by iterating the following steps:
\begin{itemize}
    \item[1.] (S-step) Generate a sample $\{\uxi^{(m)},\nu^{(m)}:m=1,\ldots,M\}$ from $\pi(\uxi,\nu |\uX_1,\uX_0,\uomega^{t})$.
    \item[2.] (E-step) Compute 
    $$Q(\uomega| \uomega^{t})=M^{-1} \sum_{m=1}^M \log f(\uX_1|\uomega,\nu^{(m)},\uxi^{(m)}).$$
    \item[3.] (M-step) Update
    $$\uomega^{t+1} = \arg\max_{\uomega \in \mathcal{W}_{\text{ps-conv}} } Q(\uomega| \uomega^{t}). $$
\end{itemize}

To implement S-Step, we use the Gibbs sampler that iteratively generates a sample from $\pi(\nu|\uxi,\uX_1,\uX_0,\uomega^{t})$ and $\pi(\uxi|\nu,\uX_1,\uX_0,\uomega^{t})$. The explicit full conditional posterior distributions are given in Section \ref{sec:Gibbs}. In E-Step, after some algebraic manipulations, we have
$$Q(\uomega | \uomega^t)= \frac{1}{2}\left(\uX_1 -  \uX_0\uomega \right)^\T \bar{\uV}^t \left(\uX_1 -  \uX_0\uomega \right) +\text{constant},$$
where
$$\bar{\uV}^t=\left(M^{-1}\sum_{m=1}^M 1/\nu^{(m)}\right)  \text{Diag}\left\{ \uI_{T_0},\left( \sum_{m=1}^M\xi_1^{(m)}/M\right)\uI_{R_1},\ldots,\left(\sum_{m=1}^M\xi_p^{(m)}/M\right)\uI_{R_p}\right\}.$$

It is worth noting that, by Theorem \ref{thm:2}, M-Step produces a sparse solution for the MAP estimate. 
As a result, a valid donor pool can be immediately constructed by selecting non-zero elements of the MAP estimate. Let $\hat{\uomega}$ be the MAP estimate obtained from the Monte Carlo EM algorithm. Using the sparsity of $\hat{\uomega}$, we can divide the donor pool into two parts: $\mathcal{A}=\{i=2,\ldots,N:\hat{\omega}_i\neq 0\}$ and $\mathcal{A}^c=\{i=2,\ldots,N:\hat{\omega}_i= 0\}$, where $\mathcal{A}$ is the set of valid donors and $\mathcal{A}^c$ consists of invalid donors. Given the valid donor pool $\mathcal{A}$, the prior of $\uomega$ is defined as
\begin{eqnarray}
\label{prior:w1} \pi(\uomega|\mathcal{A})&\propto& \mathbb{I}\left(\sum_{i\in \mathcal{A} } \omega_i=1\right) \left\{\prod_{i\in \mathcal{A}} \mathbb{I}(\omega_i\geq0)\right\}\left\{\prod_{i\in \mathcal{A}^c} \mathbb{I}(\omega_i=0)\right\}.
\end{eqnarray}

Hence, Bayesian inferences about $\omega$ given $\mathcal{A}$ can be performed by drawing a sample from $\pi(\uomega|\uX_1,\uX_0,\mathcal{A})$. Let $\{\uomega^{(m)}:m=1,\ldots,M\}$ be a posterior sample generated from $\pi(\uomega|\uX_1,\uX_0,\mathcal{A})$. Then, we obtain Bayesian inferences for the treatment effects by using \begin{eqnarray*}
\theta_{1t}^{(m)}=Y_{1t}-\left( {\omega}_{1}^{(m)}+\sum_{i=2}^{N} {\omega}_{i}^{(m)} Y_{it}\right),\quad t=T_0+1,\ldots,T,~m=1,\ldots,M,
\end{eqnarray*}
since $\{({\omega}_{1T_0+1}^{(m)},\ldots,{\omega}_{1T}^{(m)}):m=1,\ldots,M\}$ is generated from the joint posterior distribution,  $\pi({\omega}_{1T_0+1},\ldots,{\omega}_{1T}|\uX_1,\uX_0,\mathcal{A})$.

However, to generate a sample from $\pi(\uomega|\uX_1,\uX_0,\mathcal{A})$, we face the following intractable integration problem:
$$\pi(\uomega|\uX_1,\uX_0,\mathcal{A})=\int \int  \pi(\uomega, \nu, \uxi |\uX_1,\uX_0,\mathcal{A}) d\nu d\uxi.$$
Due to the use of the conjugate hyperpriors, the above algebraic difficulty can be avoided by using the Gibbs sampler \citep{gelfand1990sampling}. The implementation of Gibbs sampling is discussed in the following section. 

\section{Posterior inference using Gibbs sampling}\label{sec:Gibbs}
As mentioned in the previous section, the practical issue of implementing the proposed Bayesian inference procedure is how to generate a sample from the posterior distribution of $\uomega$ given the data and the valid donor pool, $\pi(\uomega|\uX_1,\uX_0,\mathcal{A})$. To address this challenge, we use a Markov Chain Monte Carlo (MCMC) method with the Gibbs sampler. In a hierarchical Bayesian framework, the Gibbs sampling algorithm generates a Gibbs sequence of random variables
$$\uomega^{(0)}, \nu^{(0)}, \uxi^{(0)} ,\ldots, \uomega^{(m)},\nu^{(m)}, \uxi^{(m)} , \ldots $$
by iteratively sampling from the following full conditional distributions: 
\begin{eqnarray*}
\uomega^{(m+1)}&\sim& \pi(\uomega| \nu^{(m)}, \uxi^{(m)} , \uX_1,\uX_0,\mathcal{A} ),\\
\nu^{(m+1)}&\sim& \pi(\nu|\uomega^{(m+1)}, \uxi^{(m)} , \uX_1,\uX_0,\mathcal{A}),\\
\uxi^{(m+1)}&\sim& \pi(\uxi|\uomega^{(m+1)}, \nu^{(m+1)}, \uX_1,\uX_0,\mathcal{A}).
\end{eqnarray*}

We now derive the exact full conditional distribution for each parameter. For $\uomega$, the full conditional posterior density is proportional to
\begin{eqnarray*}
\pi(\uomega| \nu, \uxi , \uX_1,\uX_0,\mathcal{A}  ) &\propto&  f(\uX_1|\uomega,\nu,\uxi) \pi(\uomega|\mathcal{A}) \\
&\propto& \exp\left\{-\frac{1}{2}\left\|\uV^{1/2}\left(\uX_1  -  \uX_0\uomega \right)\right\|^2\right\} \\
&&\times \mathbb{I}\left(\sum_{i\in \mathcal{A} } \omega_i=1\right) \left\{\prod_{i\in \mathcal{A}} \mathbb{I}(\omega_i\geq0)\right\}\left\{\prod_{i\in \mathcal{A}^c} \mathbb{I}(\omega_i=0)\right\}.
\end{eqnarray*}
While sampling jointly from $\pi(\uomega| \nu, \uxi , \uX_1,\uX_0,\mathcal{A}  )$ is difficult, the component-wise Gibbs sampling, i.e. generating $\omega_j\sim \pi(\omega_j|\uomega_{-j},\nu, \uxi , \uX_1,\uX_0,\mathcal{A} ),$ is available as follows: 
\begin{itemize}
\item Generate $\omega_1 \sim \text{N}(\mu_1,\sigma_1^2)$, where
$ \mu_1=(\uX_{01}^\T \uV\uX_{01})^{-1} \uX_{01}^\T \uV (\uX_1- \sum_{i\in \mathcal{A}}\omega_{i} 
\uX_i)$ and $\sigma^2_1=(\uX_{01}^\T \uV\uX_{01})^{-1}$; 
\item For $j \in \mathcal{A}/\{ \max (\mathcal{A}) \}$, 
generate $\omega_i\sim\text{TN}_{[0, U_j]} (\mu_j,\sigma_j^2)$, where 
$ \mu_j=(\uX_{j}^\T \uV\uX_{j})^{-1} \uX_{j}^\T \uV (\uX_1- \omega_1\uX_{01}- \sum_{i\in \mathcal{A}\setminus\{j\} }\omega_{i} 
\uX_i)$, $\sigma^2_j=(\uX_{j}^\T \uV\uX_{j})^{-1}$, and $U_i=1-\sum_{j\in \mathcal{A}\setminus\{i\} }\omega_j$; 
\item For $j=\max (\mathcal{A})$, set $\omega_j = 1-\sum_{j \in \mathcal{A}/\{ \max (\mathcal{A}) \} } \omega_i$;
\item For $j \in \mathcal{A}^c$ set $\omega_j=0$.
\end{itemize}

Since $\pi(\nu)$ is conjugate to the likelihood function, it follows from Bayes' theorem that
\begin{eqnarray}
\nonumber &&\nu|\uomega, \uxi ,\uX_1,\uX_0,\mathcal{A}  \\
\label{nu:full}&&\quad \sim \text{Inverse-Gamma}\left( \frac{T_0+\sum_{j=1}^p \xi_j R_j}{2}+c_0, \frac{\left\|\uS_\uxi^{1/2}\left(\uX_1 - \uX_0\uomega\right)\right\|^2}{2} +d_0 \right).
\end{eqnarray}
where $\uS_\uxi^{1/2}=\text{Diag}( \uI_{T_0},\sqrt{\xi_1} \uI_{R_1} ,\ldots,\sqrt{\xi_p} \uI_{R_p})$.
Similarly, from Bayes' theorem, we have
\begin{eqnarray*}
\pi(\xi_j| \uomega, \nu,\uX_1,\uX_0,\mathcal{A}  )\propto \phi\left(\left.\uD^j_{\xi_j} {\uZ}_{1}  \right| \uD^j_{\xi_j}\uZ_{0} \uomega, \nu \uI \right) \eta_{0j}^{\xi_j}(1-\eta_{0j})^{\xi_j},\quad j=1,\ldots,p,
\end{eqnarray*}
where $\uD^j_{\xi}=\text{Diag}(0 \uI_{R_1} ,\ldots, 0 \uI_{R_{j-1}} , \xi \uI_{R_j} , 0 \uI_{R_{j+1}} ,\ldots, 0 \uI_{R_{p}}  )$ and $\phi(\cdot|\mu,\Sigma)$ denotes a multivariate normal density function with mean vector $\mu$ and covariance matrix $\Sigma$.  This leads to
\begin{eqnarray}
\label{xi:full}\xi_j | \uomega, \nu,\uX_1,\uX_0,\mathcal{A} \overset{ind}{\sim} \text{Bernoulli}\left[\left\{1+\frac{(1-\eta_{0j})\phi\left(\uD^j_{0} {\uZ}_{1}  | \uD^j_{0}\uZ_{0} \uomega, \nu \uI \right)}{\eta_{0j}\phi\left(\uD^j_{1} {\uZ}_{1}  |\uD^j_{1} \uZ_{0} \uomega, \nu \uI \right) }\right\}^{-1}\right]
\end{eqnarray}
for $j=1,\ldots,p$. 

Note that $\nu$ and $\mathcal{A}$ are conditionally independent given $(\uomega, \uxi ,\uX_1,\uX_0)$. Similarly, $\uxi$ and $\mathcal{A}$ are conditionally independent given $(\uomega, \nu ,\uX_1,\uX_0)$. Hence, \eqref{nu:full} and \eqref{xi:full} can be directly used for generating a sample from $\pi(\nu|\uxi,\uX_1,\uX_0,\uomega)$ and $\pi(\uxi|\nu,\uX_1,\uX_0,\uomega)$ in the Monte Carlo EM algorithm discussed in Section \ref{sec:4}.
\section{Simulation study}
In this section, we examine the finite sample performance of the proposed synthetic control method using a Monte Carlo simulation study. Given $N=40$, $T=100$, $T_0=40$, and $p=8$, we simulate the data $\{(Y_{it},\uZ_i):i=1,\ldots, N,t=1,\ldots,T\}$ from the following linear factor model: 
\begin{eqnarray*}
Y_{it}&=& Y_{it}^{(0)}+\theta_0 \left(0.5+\sqrt{t/2}\right)D_{it},\\
Y_{it}^{(0)}&=&\mu_i+\delta_t+\uc_t^\T \uZ_i+\ub_{i}^\T \uF_t+\epsilon_{it},\\
\uF_t &=&0.2 \uF_{t-1}+\uu_{t},
\end{eqnarray*}
where $D_{it}=\mathbb{I}(i=1,t>T_0)$, $\mu_i\overset{iid}{\sim} \text{Uniform}(-1,1)$, $\delta_t=\sqrt{5 t}$, $\uZ_i \sim N_p(\1_{p},2\uI_{p})$, $\uc_t=(c_{t1},\ldots,c_{tp})^\T$ such that $c_{t1},c_{t2}\overset{iid}{\sim} \text{Uniform}(-0.2,0.2)$ and $c_{t3}=\cdots=c_{tp}=0$, $\ub_i\sim N_3(\0_3,0.5\uI_3)$, $\uF_0 \sim N_3(0,\uI_3)$,  $\uu_t \overset{iid}{\sim} N_3(0,0.25\uI_3)$, $\epsilon_{it}\overset{iid}{\sim} N(0,0.1)$, and varying values are considered for $\theta_0$. 

In the data generating process, the treatment effect of unit $1$ at time $t$ during the post-treatment period is given by
$$  \theta_{1t}=\theta_0 \left(0.5+\sqrt{t/2}\right).$$
Hence, if we set $\theta_0>0$ $(\theta_0<0)$, then the treatment effect is increasing (decreasing) with time $t$ for $t>T_0$. When we set $\theta_0=0$, there is no treatment effect for all $t$. In this simulation study, we consider the following three cases: $\theta_0=-1$, $\theta_0=0$, and $\theta_0=1$. Also, note that $\uZ_i$ is the vector of $p$ time-invariant covariates and that the first two covariates are relevant to the outcome but the remaining $p-2$ are irreverent. 

To demonstrate the role of different constraints and to compare the magnitude of bias among alternative methods, we compare the performances of the synthetic control estimators with the following four constraints: 1) $\uomega \in  \mathcal{W}_{\text{conv}} $ \citep{abadie2003economic,abadie2010synthetic}, 2) $\uomega \in \mathcal{W}_{\text{coni}}$ \citep{li2019statistical}, 3) $\uomega \in \mathcal{W}_{\text{enet}}$ \citep{doudchenko2016balancing}, and 4) $\uomega \in \mathcal{W}_{\text{ps-conv}}$ (Proposed), where $\mathcal{W}_{\text{conv}}$, $\mathcal{W}_{\text{enet}}$, $\mathcal{W}_{\text{coni}}$, and $\mathcal{W}_{\text{ps-conv}}$ are defined in Equations \eqref{conv:const}, \eqref{enet:const}, \eqref{coni},  and \eqref{lsconv:const}, respectively. 

First, we obtain an estimate of $\uomega$ by minimizing $\|\uX_1-\uX_0\uomega\|^2$ subject to $\uomega \in \mathcal{W}_{\text{conv}}$ (labeled as ADH-SCM). Second, following \citet{doudchenko2016balancing}, we compute an estimate for $\uomega$ by minimizing $\|\uY_1-\uY_0\uomega\|^2$ subject to $\uomega \in \mathcal{W}_{\text{enet}}$ (labeled as DI-SCM), where $\uY_0=[\1,\uY_2,\ldots,\uY_N]$ and the tuning parameters of elastic net penalty, $\alpha$ and $\lambda$, are selected by 10-fold cross-validation using \texttt{cva.glmnet} function in \texttt{R} package \texttt{glmnetUtils}. Third, following \citet{li2019statistical}, we obtain an estimate of $\uomega$ by minimizing $\|\uY_1-\uY_0\uomega\|^2$ subject to $\uomega \in \mathcal{W}_{\text{coni}}$ (labeled as L-SCM). Fourth, we obtain an estimate of $\uomega$ by minimizing $\|\uX_1-\uX_0\uomega\|^2$ subject to $\uomega \in \mathcal{W}_{\text{ps-conv}}$ (labeled as na\"ive Bayes SCM). Lastly, we obtain the proposed MAP estimate of $\uomega$ using the Monte Carlo EM algorithm (labeled as Bayes SCM).

To evaluate the performance of estimation, we consider the following two mean squared errors (MSEs): 
\begin{eqnarray*}
\text{MSE}_{\text{TE}}= \frac{1}{T_1}\sum_{t=T_0+1}^{T} E \left\{\left(\hat{\theta}_{1t}-\theta_{1t}\right)^2\right\} , \quad 
\text{MSE}_{\text{ATE}}=E\left\{\left(\hat{\bar{\theta}}_{1}-\bar{\theta}_{1}\right)^2\right\},
\end{eqnarray*}
where $\hat{\theta}_{1t}=Y_{1t}-\hat{\omega}_1-\sum_{i=2}^N\hat{\omega}_i Y_{it}$, $\hat{\bar{\theta}}_{1}=\sum_{t=T_0+1}^{T} \hat{\theta}_{1t}/T$, and $\hat{\uomega}=(\hat{\omega}_1,\ldots,\hat{\omega}_N)$ is the estimate of $\uomega$ obtained by each method. $\text{MSE}_{\text{TE}}$ and $\text{MSE}_{\text{ATE}}$ measure the performance of each method in estimating the treatment effects and the average treatment effect, respectively. 

The simulation results are reported in Table \ref{Table:1}. The proposed method, Bayes SCM, has the smallest MSE among the five synthetic control methods. By comparing na\"ive Bayes SCM with ADH-SCM, we can see the benefit the proposed ``parallelly shiftable convex hull'' constraint. This comparison shows that allowing a parallel shift of the convex hull leads to substantial reduction in the MSEs (about 18 \% decrease in average $\text{MSE}_{\text{TE}}$ and about 70\% decrease in average $\text{MSE}_{\text{ATE}}$). By comparing na\"ive Bayes SCM with L-SCM, we also observe that removing the condition that the weights need to add up to one significantly inflates MSEs. Finally, by comparing Bayes SCM with na\"ive Bayes SCM, we can see the merit of the proposed Bayesian approach for tuning parameter selection problems as we observe about 17 \% decrease in average $\text{MSE}_{\text{TE}}$ and about 20\% decrease in average $\text{MSE}_{\text{ATE}}$ by integrating the tuning parameters out in a Bayesian fashion. 

\begin{table}[ht]
\centering
\caption{\label{Table:1} Summary of simulation study (MSEs are estimated by using $1,000$ Monte Carlo replications.)}
\begin{tabular}{r|cc|cc|cc}
  \hline
 & \multicolumn{2}{c|}{$\theta_0=1$} &\multicolumn{2}{c|}{$\theta_0=0$} &\multicolumn{2}{c}{$\theta_0=-1$}\\
 Method &  $\text{MSE}_{\text{TE}}$ & $\text{MSE}_{\text{ATE}}$ & $\text{MSE}_{\text{TE}}$ & $\text{MSE}_{\text{ATE}}$ & $\text{MSE}_{\text{TE}}$ &$\text{MSE}_{\text{ATE}}$  \\ 
  \hline
ADH-SCM& 0.1354 &0.0161 & 0.3121  & 0.0434 & 0.1204 & 0.0158 \\
DI-SCM & 0.0983 &0.0205 &0.2272 &  0.0477 & 0.0888 & 0.0191 \\
L-SCM& 0.1467 &0.0386 &0.3336 & 0.0874 & 0.1297 & 0.0328 \\
na\"ive Bayes SCM& 0.1109 & 0.0049 &0.2527 & 0.0127 & 0.0992 &0.0045 \\
Bayes SCM& 0.0918 & 0.0039 &0.2109 & 0.0097 & 0.0816 & 0.0037 \\
   \hline
\end{tabular}
\end{table}
  
\section{Empirical application}
To demonstrate the applicability of the proposed Bayesian inference for synthetic control methods, we revisit the case study of \citet{abadie2003economic}, which estimates the effect of the terrorist conflict in the Basque Country on its per-capita GDP. We use the same data as \citet{abadie2003economic} and the data is publicly available in \texttt{R} package \texttt{Synth}. Recently, \cite{chernozhukov2018inference} and \cite{firpo2018synthetic} replicate \citet{abadie2003economic} in the context of developing synthetic control methods with different statistical methods. While the original study of \citet{abadie2003economic} finds about 10 percent points reduction in the per-capita GDP caused by the terrorism activities, the inferences of \cite{chernozhukov2018inference} and \cite{firpo2018synthetic} lead to mixed conclusions on whether the effects are statistically significant or not. 

Similar to \cite{abadie2003economic,chernozhukov2018inference} and \cite{firpo2018synthetic}, we estimate the economic effects of the terrorist conflict in the Basque Country using other 16 Spanish regions as potential control units, i.e. the donor pool. In \citet{abadie2003economic}, per-capita GDP from 1955 to 1997 is used as the outcome and 12 variables associated with economic growth potential are considered as the time-invariant covariates, where the period 1960--1969 is used to construct a synthetic control unit. The result of \citet{abadie2003economic} is reproduced by \texttt{R} package \texttt{Synth} following the instructions in \citet{abadie2011synth}. 

\begin{figure}[t]
    \centering
    \includegraphics[scale=0.5]{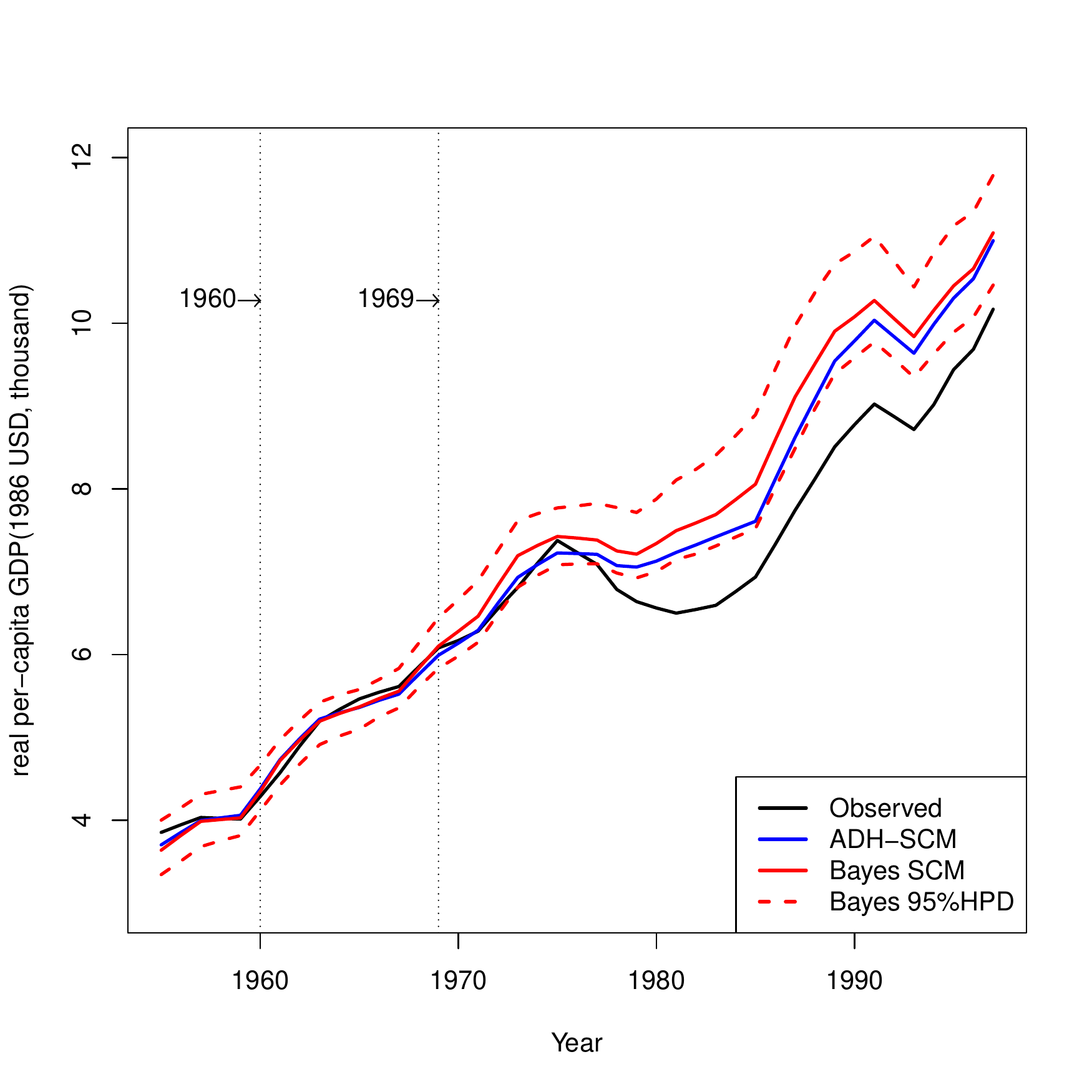}\\
    \caption{\label{fig:1} Trends in per-capita GDP: Basque Country versus synthetic Basque Country.}
\end{figure}

Figure \ref{fig:1} displays the per-capita GDP trajectory of the Basque Country and the counterfactual outcomes obtained by ADH-SCM \citep{abadie2003economic} and Bayes SCM (Proposed) for the 1955--997 period. For Bayes SCM, 95\% highest posterior density (HPD) bands are obtained by using 10,000 consecutive samples
generated from the Gibbs sampler of Section \ref{sec:Gibbs} after a burn-in period of 2,000 iterations. While the actual treatment unit and both synthetic controls show similar patterns during the pre-treatment period (1955-1969), the counterfactuals that are estimated by Bayes SCM and from 1970 the outcomes of Bayes SCM and ADH-SCM are higher than the per capita GDP of the Basque Country. Moreover, the counterfactuals estimated by Bayes SCM are slightly higher than those of ADH-SCM.

\begin{figure}[t]
    \centering
    \includegraphics[scale=0.5]{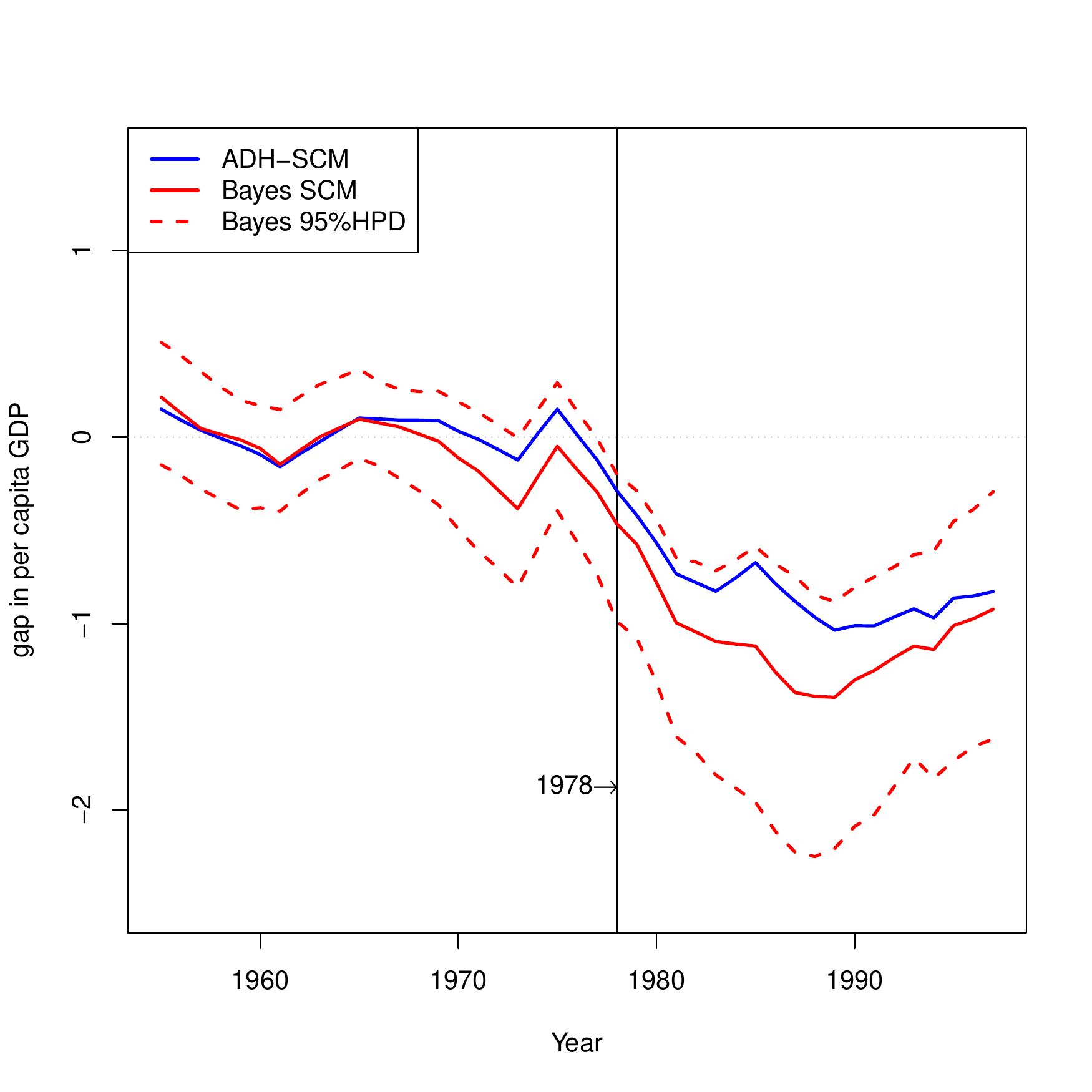}\\
    \caption{\label{fig:2} Estimated gap in per-capita GDP between Basque Country versus synthetic Basque Country}
\end{figure}

The posterior inferences on the treatment effects are shown in Figure \ref{fig:2}. We find that after 1977 the treatment effects become significant, i.e. the reduction in per-capita GDP for the Basque Country is statistically significant during the period 1978--1997, which is the subset of the post-treatment period. 

\begin{table}[t]
\centering
 \caption{\label{Table:2} The average of per-capita GDP gaps between Basque Country and the synthetic Basque Country for the 1978--1997 period}
\begin{tabular}{rrr}
  \hline
 Method & Estimate& 95\% HPD \\ 
  \hline
  Bayes SCM & -1.077 & (-1.722,-0.513) \\ 
  ADH-SCM  & -0.807 & \\ 
   \hline
\end{tabular}
\end{table}

Table \ref{Table:2} reports the average treatment effect during the 1978--1997 period. From the result of Bayes SCM, we find that the terrorist conflict leads to about 13.5 percent loss in per capita GDP in the Basque Country whereas ADH-SCM estimates is about 10 percent decrease in per capita GDP due to the terrorist conflict. 
\begin{table}[t]
\centering 
\caption{\label{Table:3} Estimated weights for the synthetic Basque Country}
\begin{tabular}{rrr}
  \hline
 Region & Bayes SCM & ADH-SCM \\ 
  \hline
(Intercept) & -0.180 & 0.000  \\ 
Andalucia  & 0.000 & 0.000 \\ 
 Aragon & 0.000 & 0.000 \\ 
 Principado De Asturias & 0.000 & 0.000 \\ 
Baleares & 0.278 & 0.000 \\ 
  Canarias& 0.000 & 0.000   \\ 
 Cantabria & 0.000 & 0.000  \\ 
  Castilla Y Leon & 0.000 & 0.000   \\ 
  Castilla-La Mancha & 0.000 & 0.000  \\ 
  Cataluna & 0.343 & 0.851  \\ 
  Comunidad Valenciana & 0.008 & 0.000   \\ 
  Extremadura & 0.000 & 0.000  \\ 
  Galicia & 0.000 & 0.000  \\ 
  Madrid & 0.372 & 0.149  \\ 
  Murcia & 0.000 & 0.000  \\ 
  Navarra & 0.000 & 0.000  \\ 
  Rioja & 0.000 & 0.000 \\ 
   \hline
\end{tabular}
\end{table}

The difference between the outcomes from Bayes SCM and ADH-SCM is explained by the estimated weights. Bayes SCM constructs the synthetic Basque Country using three regions (Baleares, Cataluna, and Madrid) and the intercept term while that of ADH-SCM is based on only two regions (Cataluna and Madrid) as reported in Table \ref{Table:3}. We suspect that the parallel shift plays a substantial role in such difference in the estimated treatment effects. 

Finally, Table \ref{Table:2} also reports the HPD of the ATE during the 1978--1997 period from Bayes SCM estimation. Obviously, given that the TEs are significant as we see in Figure \ref{fig:2}, it shows that the ATE is statistically significant at 95\% level. However, note that we can estimate the ATEs for any sub-periods in the post-treatment periods and state statistical inferences.

\section{Concluding remarks}
As discussed by \cite{athey2017state}, synthetic control methods have been one of the most influential innovations in observational studies. Establishing a clear theoretical foundation on the estimation of the weights for each control unit to create a synthetic control and the statistical inferences is crucial for improving our understanding on how to implement synthetic control approaches. In this paper, we discuss the importance of preserving the convex hull restriction of \cite{abadie2003economic} and \cite{abadie2010synthetic} and propose a synthetic control method that uses a \emph{parallelly shiftable} convex hull to allow some flexibility in estimation while minimizing the risk of extrapolation. We also provide a theoretical discussion on the relationship between the convex hull restriction and the sparsity in the estimation of the weights using the Lagrangian duality.

We then propose a novel Bayesian framework that implements the synthetic control method with the \emph{parallelly shiftable} convex hull and provides a useful Bayesian inference. The simulation results show that our proposed approach leads to smaller MSEs in estimating treatment effects compared to alternative approaches. Further researches in understanding the small sample properties such as the degrees of bias and efficiency of the alternative estimators across different settings such as changes in the number of donors, and the number of predictors, remain. The real data application to the example of \cite{abadie2003economic} finds that the treatment effects are only significant for the subset of the post-treatment period, which previous applications fail to capture.

\end{document}